\newtheorem{theorem}{Theorem}
\newtheorem*{theorem*}{Theorem}
\newtheorem*{claim*}{Claim}
\newtheorem*{remark*}{Remark}
\newtheorem*{lemma*}{Lemma}
\newtheorem{lemma}[theorem]{Lemma}
\newtheorem{corollary}[theorem]{Corollary}
\newtheorem*{corollary*}{Corollary}
\newtheorem{prop}[theorem]{Proposition}
\newtheorem{remark}[theorem]{Remark}
\newtheorem{problem}{Problem}
\newcommand{\conv}{{\mathsf{conv}}}
\newcommand{\sep}{{\mathsf{sep}}}
\newcommand{\R}{\mathbb{R}}
\newcommand{\xc}{\mathrm{xc}}
\newcommand{\bits}{\{0,1\}}
\DeclareMathOperator{\sgn}{sgn}
\begin{document}

\title{On the extension complexity of polytopes separating subsets of the Boolean cube}

\author{Pavel Hrube{\v s}\footnote{\tt pahrubes@gmail.com} \qquad Navid Talebanfard\footnote{\tt talebanfard@math.cas.cz}}
\affil{Institute of Mathematics \\ Czech Academy of Sciences}

\date{}

\maketitle

\begin{abstract}
We show that 
\begin{enumerate} 
\item for every $A\subseteq \bits^n$, there exists a polytope $P\subseteq \R^n$ with $P \cap \bits^n = A$ and extension complexity $O(2^{n/2})$, 
\item there exists an $A\subseteq \bits^n$ such that the extension complexity of any $P$ with $P\cap \bits^n = A$ must be at least $2^{\frac{n}{3}(1-o(1))}$.   
\end{enumerate}
We also remark that the extension complexity of any 0/1-polytope in $\R^n$ is at most $O(2^n/n)$ and pose the problem whether the upper bound can be improved to $O(2^{cn})$, for $c<1$. 
\end{abstract}

\section{Introduction}
A polytope $P\subseteq \R^n$ with many facets can often be expressed as a projection of a higher-dimensional polytope $Q\subseteq \R^m$ with much fewer facets. This is especially significant in the context of linear programming: instead of optimizing a linear function over $P$, it is more efficient to optimize over $Q$. \emph{Extension complexity of} $P$ 
is defined as the smallest $k$ so that $P$ is an affine image of a polytope with $k$ facets. Extension complexity has been studied in the seminal paper of Yannakakis \cite{Yannanakis}, Fiorini et al., Rothvo{\ss}  \cite{FioriniExp, Rothvoss}, and others. In \cite{Rothvoss}, Rothvo{\ss} has shown that there exist 0/1-polytopes in $\R^n$ with extension complexity $2^{\frac{n}{2}(1-o(1))}$; in fact, a random polytope has this property. 
Our paper originated with the question whether the bound of Rothvo{\ss} is tight. 
\begin{problem}\label{problem1}
Can every 0/1-polytope $P$ be expressed as a projection of a polytope with $O(2^{cn})$ facets, for a constant $c<1$? 
\end{problem}
Note that $P$ itself can have many more than $2^n$ facets \cite{Barany}. Extension complexity, however, can be bounded by the number of vertices and hence the trivial upper bound is $2^n$. In Problem \ref{problem1}, we want to know which of the bounds, $2^{n/2}$ vs. $2^n$, is closer to the truth. This is reminiscent of a similar problem in $\R^2$. In \cite{FioriniPolygons}, Fiorini et al. have shown that there exist polygons in $\R^2$ with $k$ vertices and  extension complexity $\Omega(\sqrt{k})$. Quite surprisingly, Shitov \cite{Shitov} has shown that every $k$-vertex polygon has extension complexity $O(k^{2/3})$.
Furthermore, there is an $O(k^{1/2})$
upper bound for several natural classes of polygons \cite{Kwan}.  

Problem \ref{problem1} is related to a similar question about graphs. Given an $n$-vertex graph, let $P_G\subseteq \R^n$ be the convex hull of characteristic vectors of its edges. An explicit description of $P_G$ in terms of inequalities is known \cite{EdgeHull,EdgeHull2}, and it is especially simple in the case of bipartite graphs. The trivial upper bound on $\xc(P_G)$ is $O(n^2)$. We point out that any improvement on this trivial bound gives an improvement on extension complexity of 0/1-polytopes. This reduction is similar to the so-called \emph{graph complexity} (see \cite{DBLP:journals/acta/PudlakRS88, MR3307088}) where an $n$ variate Boolean function is interpreted as defining a graph on exponentially many vertices. Extension complexity of $P_G$ has been investigated by Fiorini et al.  in \cite{FioriniStable}, where a non-trivial upper bound $O(n^2/\log n)$ was presented (cf. \cite{StableSet}). This yields a modest contribution to Problem \ref{problem1}: $P$ is a projection of a polytope with $O(2^n/n)$ facets. 

We mainly focus on a relaxation of Problem \ref{problem1}. Given $A\subseteq \{0,1\}^n$, a polytope $P\subseteq \R^n$ will be called \emph{separating} for $A$ if $P\cap \bits^n= A$. 
In other words, $P$ separates Boolean points in $A$ from those outside of $A$. The smallest separating polytope is $\conv(A)$ itself. Extension complexity of separating polytopes has several connections with computational complexity, as extensively discussed in \cite{ja:epsilon}. Here we show that every $A\subseteq \bits^n$ has a separating polytope with extension complexity $O(2^{n/2})$. This is achieved by the aforementioned reduction to graphs, and by showing that the set of edges of $G$ has a separating polytope of linear size.  This quantitatively matches the lower bound of Rothvo{\ss}  -- except that the assumptions are different. There are infinitely many separating polytopes other than $\conv(A)$ itself and so the lower bound is not applicable. In \cite{ja:real}, a lower bound of $2^{\Omega(n)}$ on extension complexity of separating polytopes has been given. The constant in the exponent hinges on known bounds on quantifier elimination and it is not hard to see that the proof from \cite{ja:real} gives $2^{\frac{n}{5}(1-o(1))}$. We will improve this bound to $2^{\frac{n}{3}(1-o(1))}$ using a more geometrical argument.    

\section{The main result}

A \emph{polytope} $P\subseteq \R^n$ is the convex hull of a finite set of points in $\R^n$. 
It can also be viewed as a bounded set defined by a finite number of linear constraints.  The \emph{extension complexity} of a polytope  $P$, $\xc(P)$, is the smallest $k$ so that there exists a polytope $Q\subseteq \R^m$ with $k$ facets and an affine map $\pi:\R^m\rightarrow \R^n$  such that $P=\pi(Q)$. Given $A\subseteq \bits^n$, its \emph{separation complexity}, $\sep(A)$, is the minimum $\xc(P)$ over all polytopes $P\subseteq \R^n$ with 
\[P\cap \bits^n = A\,;\]
such a $P$ is called a \emph{separating} polytope for $A$.

We provide non-trivial upper and lower bounds on $\sep(A)$:

\begin{theorem}\label{thm:main} \begin{enumerate}\item For every $A\subseteq \bits^n$, $\sep(A)\leq O(2^{n/2})$.
\item There exists $A\subseteq \bits^n$ with $\sep(A)\geq 2^{\frac{n}{3}(1-o(1))}$. 
\end{enumerate}
\end{theorem}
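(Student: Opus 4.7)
For the upper bound, I plan to reduce to a problem about bipartite graphs. Assume $n$ is even; set $m = n/2$, $V_1 = V_2 = \bits^m$, and identify $\bits^n$ with $V_1 \times V_2$. The set $A$ then becomes the edge set $E$ of a bipartite graph on $V_1 \sqcup V_2$. First I would construct, for any bipartite graph, a separating polytope for its edge set whose extension complexity is linear in $|V_1| + |V_2|$. With coordinates $x \in \R^{V_1}$, $y \in \R^{V_2}$, take
\[
Q = \left\{(x,y) : x, y \geq 0,\ \sum_{v} x_v = \sum_{u} y_u = 1,\ x_v \leq \sum_{u : (v,u) \in E} y_u \text{ for all } v \in V_1\right\}.
\]
A direct check shows $Q \cap \bits^{V_1 \sqcup V_2} = \{e_v + e_u : (v,u) \in E\}$: the two simplex constraints force $x = e_v$ and $y = e_u$ in the Boolean case, and the neighborhood constraint then forces $(v,u) \in E$.

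Second, I would push $Q$ back to $\R^n$ via the affine labeling map $\rho : \R^{V_1 \sqcup V_2} \to \R^n$ given by $\rho(x, y)_i = \sum_{v \in V_1 : v_i = 1} x_v$ for $i \leq m$, and analogously on the second half of the coordinates. For an integer point $(e_v, e_u)$ of $Q$ we have $\rho(e_v, e_u) = (v, u) \in A$, so $A \subseteq \rho(Q)$. For the other direction, if $\rho(x, y)$ is Boolean then each coordinate sum is in $\{0,1\}$; combined with $\sum_v x_v = 1$ this forces $x$ to be concentrated on a unique vertex of $V_1$ (and similarly for $y$), so $(x,y)$ is itself an edge indicator. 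Hence $P := \rho(Q)$ satisfies $P \cap \bits^n = A$ and $\xc(P) \leq \xc(Q) = O(|V_1| + |V_2|) = O(2^{n/2})$.

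For the lower bound, the plan is a counting argument. Let $N(n,k)$ be the number of $A \subseteq \bits^n$ realizable as $P \cap \bits^n$ for some polytope $P$ with $\xc(P) \leq k$. Any such $P$ can be written as $\pi(Q)$ for $Q = \{z \in \R^m : Cz \leq d\}$ with $k$ rows and $m \leq k$, and affine $\pi(z) = Mz + t$; so $P$ is specified by $s = O(k^2 + kn)$ real parameters. For each $b \in \bits^n$, the predicate ``$b \in P$'' is feasibility of the LP $Cz \leq d$, $Mz = b - t$; expanding over bases via Cramer's rule, it becomes a Boolean combination of polynomial inequalities in the parameters of degree $O(k)$. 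A Milnor--Thom/Warren type bound on the realizable sign patterns of the resulting family of $2^n$ predicates then upper-bounds $N(n, k)$, and comparison with the total $2^{2^n}$ gives a lower bound on $k$.

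The main obstacle is controlling the semialgebraic complexity so as to obtain the claimed exponent $n/3$. The bound yielded by the black-box quantifier elimination used in \cite{ja:real} is only $2^{n/5(1-o(1))}$; the geometric refinement should exploit the explicit structure of LP feasibility --- that infeasibility is witnessed by a basic Farkas certificate whose combinatorial type is drawn from the bases of the system and whose dependence on $b$ is linear --- so that the sign-pattern count admits a sharper estimate. Working out the balance drives $N(n, k)$ strictly below $2^{2^n}$ as soon as $k \leq 2^{n/3(1-o(1))}$, forcing some $A \subseteq \bits^n$ to satisfy $\sep(A) > 2^{n/3(1-o(1))}$.
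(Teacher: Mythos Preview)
Your upper bound is correct and is essentially the paper's argument. The polytope $Q$ you write down is the bipartite specialization of the paper's Lemma~\ref{lem:graph} (in fact it coincides with the polytope $R'_G$ of Remark~\ref{remark1}), and your map $\rho$ is exactly the paper's projection $\pi$; the uniqueness argument ``a convex combination of cube vertices that lands on a cube vertex must be trivial'' is the same one the paper gives. So Part~1 is fine.

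For Part~2 you have the right overall shape (parameter count plus a Warren/Milnor--Thom sign-pattern bound), but the step you flag as the ``main obstacle'' is genuinely missing, and the paper does \emph{not} resolve it via the Farkas/LP-feasibility route you propose. The paper's key extra idea is to first \emph{fix the combinatorial type} of the extension $Q$. By Alon's theorem there are only $2^{r^3(1+o(1))}$ face lattices of $r$-facet polytopes (Corollary~\ref{cor:type}); once the lattice $L$ is fixed, each vertex of $Q$ is the solution of a \emph{specific} $d\times d$ linear system determined by $L$, hence a degree-$d$ rational function of the $O(r^2)$ facet parameters via Cramer's rule. Membership of a Boolean point in the projection $P=\pi(Q)=\conv(V)$ is then decided, not by Farkas, but by the explicit facet-of-a-point-set construction of Lemma~\ref{lm:func-count}: for each $n$-subset $S$ of $V$ one writes the supporting hyperplane through $V_S$, checks whether it is a facet, and evaluates each $\sigma\in\{0,1\}^n$ against it. This yields $O(|V|^{n+1})\le 2^{O(rn)}$ polynomials of controlled degree in $O(r^2)$ parameters, and Warren's bound gives $|S_L|\le 2^{O(nr^3)}$ separable sets per type. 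Multiplying by the number of types still gives $2^{O(nr^3)}$, whence $r\ge 2^{n/3(1-o(1))}$.

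Your Farkas-certificate plan may well be workable, but as written it does not yet control either the number of basic certificates or the degree/parameter balance precisely enough to land on the exponent $1/3$; you would need to argue carefully that enumerating bases of the system $\{Cz\le d,\ Mz=b-t\}$ contributes at most $2^{O(k)}$ polynomials per Boolean point and degree $O(k)$, and then redo the Warren arithmetic. The paper sidesteps this by the combinatorial-type decomposition, which is the concrete missing ingredient in your sketch.
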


\begin{remark} In \cite{ja:epsilon,ja:real}, separation complexity is defined slightly differently with $P$ allowed to be an \emph{unbounded} polyhedron. This is just a cosmetic detail -- we can intersect $P$ with $[0,1]^n$ (or a simplex containing it) which increases its complexity by an additive term of $O(n)$. 
\end{remark}

As observed in \cite{Yannanakis,ja:epsilon}, a Boolean circuit of size $s$ which accepts precisely the inputs from $A$ gives a separating polytope for $A$ with extension complexity $O(s+n)$. 
This means that an upper bound of $O(2^n/n)$ on $\sep(A)$ can be obtained from known upper bounds on Boolean circuits  due to Lupanov (see Section 1.4.1 in \cite{MR2895965}). 
A lower bound of $2^{\Omega(n)}$ on separation complexity has been obtained in \cite{ja:real}. 
Both bounds presented in Theorem \ref{thm:main} are quantitatively stronger.

\section{Simple bounds on the number of facets}

We first give some elementary bounds on the number of facets of separating polytopes. This is mainly to contrast it both with extension complexity and the number of facets of $0/1$-polytopes. 

\begin{prop}
\label{prop:facets1} 
Every $A\subseteq \bits^n$ has a separating polytope with at most $2^n$ facets. 
\end{prop}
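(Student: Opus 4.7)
The plan is to describe $P$ by a single family of $2^n$ linear inequalities, one for each Boolean vertex. For each $b\in\bits^n$ define the linear functional
\[
L_b(x)\;=\;\sum_{i:\,b_i=0} x_i\;+\;\sum_{i:\,b_i=1}(1-x_i)\;=\;\sum_{i=1}^{n}(1-2b_i)\,x_i+|b|,
\]
which on the cube is exactly the Hamming distance from $x$ to $b$. In particular $L_b(b)=0$ and $L_b(y)\ge 1$ for every $y\in\bits^n\setminus\{b\}$. I would then set
\[
P\;=\;\bigcap_{b\in\bits^n}\bigl\{x\in\R^n:\,L_b(x)\ge f(b)\bigr\},\qquad f(b)=\begin{cases}0 & b\in A,\\ 1 & b\notin A.\end{cases}
\]
This is the intersection of $2^n$ halfspaces, so once $P$ is shown to be a bounded separating polytope for $A$ it automatically has at most $2^n$ facets.

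The verification splits naturally into two claims. First, $P\cap\bits^n=A$: for $a\in A$, the relaxed constraint $L_a(a)\ge 0$ holds trivially while $L_b(a)\ge 1\ge f(b)$ for every $b\ne a$; conversely, if $c\in\bits^n\setminus A$ then the constraint indexed by $c$ reads $L_c(c)\ge 1$ and is violated because $L_c(c)=0$. Second, $P$ is bounded. Given any nonzero direction $v\in\R^n$, choose $b\in\bits^n$ with $b_i=1$ when $v_i>0$ and $b_i=0$ otherwise; then the directional derivative of $L_b$ along $v$ equals $\sum_i(1-2b_i)v_i=-\|v\|_1<0$, so the affine function $L_b$ tends to $-\infty$ along that ray and the inequality $L_b\ge f(b)$ is eventually violated. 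Hence $P$ contains no infinite ray and is a polytope.

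No serious obstacle is expected: the key observation is that the $2^n$ Hamming-distance functionals $L_b$ already do double duty, simultaneously pinning down which Boolean points belong to $P$ and forcing boundedness, once we tighten each inequality at $b\notin A$ (to exclude $b$) and relax it at $b\in A$ (to retain $a$). The only minor technical step is the boundedness computation above, and that reduces to a one-line observation about the gradient of $L_b$ in the sign pattern of $v$.
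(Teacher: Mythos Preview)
Your construction is exactly the paper's: the functional $L_b$ is the paper's $h_b$, and the paper likewise takes the constraints $h_\sigma\ge 1$ for $\sigma\notin A$ together with $h_\sigma\ge 0$ for $\sigma\in A$ to ensure boundedness. Your proof is correct and only adds an explicit boundedness check (the directional-derivative calculation), which the paper leaves implicit.
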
 

\begin{proof} For $x\in \R^n$ and $\sigma\in \bits^n$, define
\[ h_\sigma(x):= \sum_{i=1}^n x_i(1-\sigma_i)+(1-x_i)\sigma_i\,.\]
If $x$ is Boolean, $h_\sigma(x)$ is the Hamming distance between $x$ and $\sigma$. 
Define $P\subseteq \R^n$ by the constraints $h_\sigma(x)\geq 1$ for every $\sigma\in \bits^n\setminus A$. Then 
indeed $P\cap \bits^n= A$. $P$ may be possibly unbounded. This can be remedied by adding the constraints $h_\sigma(x)\geq 0$, $\sigma\in A$.
\end{proof}

Let $\mathsf{ODD}_n\subseteq \bits^n$ be the set of Boolean strings with odd number of ones. 

\begin{prop}\label{prop:facets2} If $n\geq 2$, every separating polytope for $\mathsf{ODD}_n$ has at least $2^{n-1}$ facets. 
\end{prop}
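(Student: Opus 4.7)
The plan is to associate, to each facet of a separating polytope $P$ for $\mathsf{ODD}_n$, a halfspace inequality $a \cdot x \leq b$ that is tight on that facet, to observe that this inequality must hold on all of $\mathsf{ODD}_n$, and to show that it can be violated by at most one point of $\bits^n \setminus \mathsf{ODD}_n$. Since every even string lies outside $P$ and so must violate some facet inequality, a pigeonhole count will then force at least $|\bits^n \setminus \mathsf{ODD}_n| = 2^{n-1}$ facets.

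The core claim I would prove is: any halfspace $H = \{x : a\cdot x \leq b\}$ that contains $\mathsf{ODD}_n$ contains all but at most one even string. I would argue by contradiction. Suppose $\sigma_1, \sigma_2 \in \bits^n \setminus \mathsf{ODD}_n$ are distinct with $a \cdot \sigma_j > b$ for $j=1,2$; pick any coordinate $i$ where they disagree, and let $\tau_j$ be obtained from $\sigma_j$ by flipping bit $i$. Each $\tau_j$ is odd, and a coordinate-wise check gives $\tau_1 + \tau_2 = \sigma_1 + \sigma_2$, because the flips at position $i$ cancel in the real-valued sum (the $i$-th coordinates of $\sigma_1$ and $\sigma_2$ are one $0$ and one $1$, and are simply swapped by the flips). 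Linearity now forces $2b \geq a\cdot\tau_1 + a\cdot\tau_2 = a\cdot\sigma_1 + a\cdot\sigma_2 > 2b$, a contradiction.

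The main subtlety, which I expect to be the principal technical obstacle, is identifying facets of $P$ with halfspace inequalities in $\R^n$: this matching is clean when $P$ is full-dimensional, so I would first verify full-dimensionality. For $n \geq 3$ this is automatic, because $\conv(\mathsf{ODD}_n)$ contains the affinely independent points $e_1,\dots,e_n$ and $e_1+e_2+e_3$, hence has dimension $n$, and any separating $P \supseteq \conv(\mathsf{ODD}_n)$ inherits full-dimensionality. The remaining case $n=2$ I would handle by direct inspection: the only lower-dimensional separating polytope is the segment $\conv(\{e_1,e_2\})$, which already has $2 = 2^{n-1}$ facets, so the bound holds there as well.
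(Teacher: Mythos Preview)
Your overall strategy matches the paper's exactly: prove that any closed halfspace containing $\mathsf{ODD}_n$ excludes at most one even string, handle full-dimensionality so that the facet--halfspace identification is valid, and conclude by pigeonhole. Where you diverge is in the proof of the key halfspace claim. The paper normalises by an affine map so that the excluded even string is the origin, then observes that containment of all unit vectors forces every nonzero Boolean point into $H$. You instead take two putative excluded even strings, flip a single disagreeing coordinate in each to obtain two odd strings whose real sum coincides with that of the originals, and derive $2b \ge a\cdot(\tau_1+\tau_2) = a\cdot(\sigma_1+\sigma_2) > 2b$. Both arguments are short and correct; yours avoids the change of coordinates, while the paper's yields the marginally stronger conclusion that $H$ contains \emph{every} Boolean point other than the single excluded $\sigma$, not just the even ones.

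One small slip in your $n=2$ case: $\conv(\{e_1,e_2\})$ is not the \emph{only} lower-dimensional separating polytope---any segment on the line $x_1+x_2=1$ containing $e_1$ and $e_2$ works---but every such polytope is still a segment and hence has exactly two facets, so your conclusion stands.
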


\begin{proof} Let $H$ be a closed half-space $\{x\in \R^n: \sum_i a_ix_i\geq b\}$. We claim the following: if $\mathsf{ODD}_n\subseteq H$ then $\bar{H}:= \R^n\setminus H$ contains at most one even string $\sigma\in \bits^n\setminus \mathsf{ODD}_n$. To see this, assume an even $\sigma$ is contained in $\bar{H}$. Without loss of generality, we can assume that $\sigma$ is the zero vector; otherwise apply an affine map that flips $0$ and $1$ for the $1$-coordinates of $\sigma$. Since $\sigma\not \in {H}$, we must have $b>0$.
Since every unit vector is in $\mathsf{ODD}_n \subseteq {H}$, we have $a_i\geq b$ for every $i$. This means that $\bits^n\setminus\{\sigma\}\subseteq H$ and no other even string can be in $\bar{H}$. 

If $n= 2$, the statement of the proposition is clear. Let $n\geq 3$ and assume that $P$ is a separating polytope for $\mathsf{ODD}_n$ with $r$ facets. Then $P = \bigcap_{i=1}^r H_i$ where $H_i$ are closed half-spaces. (This is because $P$ is full-dimensional for $n \ge 3$). We have $\mathsf{ODD}_n\subseteq H_i$ for every $i$, and every even $\sigma$ is contained in at least one $\bar{H}_i$. Since $|\bits^n\setminus \mathsf{ODD}_n|=2^{n-1}$, this gives $r\geq 2^{n-1}$.  
\end{proof}

By the result of B\'{a}r\'{a}ny \cite{Barany}, $A$ can have $2^{\Omega(n\log n)}$ facets -- hence Proposition \ref{prop:facets1} shows that a separating polytope for $A$ can have much fewer facets than $\conv(A)$. The convex hull of $\mathsf{ODD}_n$, also known as the parity polytope, has extension complexity $O(n)$ (see \cite{Carr}), and it is trivially a separating polytope for $\mathsf{ODD}_n$ -- hence Proposition \ref{prop:facets2} shows that taking extensions into a higher dimension can be exponentially powerful. It also shows that the $O(2^{n/2})$ upper bound from Theorem \ref{thm:main} cannot be achieved simply by counting the facets of the separating polytope.

\section{The upper bound}

We now prove the upper bound from Theorem \ref{thm:main}.

Let $B_2^n\subseteq \{0,1\}^n$ be the set of Boolean vectors of Hamming weight two (i.e., with exactly two ones).  For a natural number $n$, $[n]$ will denote the set $\{1,\dots, n\}$.

\begin{lemma}\label{lem:graph} Let $H\subseteq B^n_2$. Then there exists a polytope $R_H\subseteq [0,1]^n\subseteq \R^n$ with at most $2n$ facets such that 
$R_H\cap \{0,1\}^n= H$. 
\end{lemma}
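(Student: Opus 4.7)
The plan is to identify $H$ with the edge set of a simple graph $G$ on vertex set $[n]$ via $e_i+e_j\leftrightarrow\{i,j\}$, and for $i\in[n]$ let $N(i)=\{j:\{i,j\}\in H\}$. I would take
\[
R_H:=\Bigl\{x\in\R^n:\ \sum_{i=1}^n x_i=2,\ x_i\geq 0\text{ for all }i,\ x_i\leq \sum_{j\in N(i)} x_j\text{ for all }i\Bigr\}.
\]
This H-description uses one affine equality (which only fixes the affine hull of $R_H$ and does not contribute a facet) together with $2n$ inequalities, so as soon as $R_H$ is shown to be a bounded subset of $[0,1]^n$, the facet bound follows.

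First, I would verify that $R_H\cap\{0,1\}^n=H$. If $\{i,j\}\in H$, then $e_i+e_j$ has coordinate sum $2$ and non-negative entries; the inequality $x_k\leq\sum_{\ell\in N(k)}x_\ell$ reads $1\leq 1$ for $k\in\{i,j\}$ (using $j\in N(i)$ and $i\in N(j)$) and $0\leq(\text{non-negative})$ otherwise. Conversely, any Boolean $x\notin H$ is either of Hamming weight different from $2$, hence ruled out by the equality, or equals $e_i+e_j$ for some non-edge $\{i,j\}$; in the latter case $j\notin N(i)$ and the $i$-th inequality degenerates to $1\leq 0$.

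Second, and this is the one place that requires an argument, I would show $R_H\subseteq[0,1]^n$, which immediately forces $R_H$ to be bounded and hence a genuine polytope. Non-negativity is built into the definition. For the upper cube bound, rewrite the $i$-th inequality using $\sum_j x_j=2$:
\[
x_i\ \leq\ \sum_{j\in N(i)}x_j\ =\ 2-x_i-\!\!\sum_{j\notin N(i)\cup\{i\}}\!\!x_j\ \leq\ 2-x_i,
\]
where the last step uses $x_j\geq 0$ for every $j\notin N(i)\cup\{i\}$. Hence $x_i\leq 1$.

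I do not anticipate a real obstacle. The only conceptual point, and the reason the count comes out to exactly $2n$, is that the single constraint $x_i\leq\sum_{j\in N(i)}x_j$ does double duty: it excises the Boolean weight-$2$ vectors corresponding to non-edges of $G$, and, together with $\sum_j x_j=2$ and non-negativity, it automatically supplies the cube bound $x_i\leq 1$. A less economical description that imposes the full unit cube by hand would add $n$ redundant constraints $x_i\leq 1$, producing $3n+O(1)$ facets and missing the tight bound.
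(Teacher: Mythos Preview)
Your proposal is correct and is essentially identical to the paper's own proof: the same polytope $R_H$ defined by $\sum_i x_i=2$, $x_i\ge 0$, and $x_i\le\sum_{j\in N(i)}x_j$, with the same verification that $R_H\cap\{0,1\}^n=H$ and the same observation that the constraints force $x_i\le 1$. You actually supply the one-line derivation of $x_i\le 1$ that the paper leaves as ``easy to see,'' but otherwise the arguments coincide.
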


\begin{proof} 
It is convenient to view $H$ as representing edges of a graph with vertex set $[n]$. Namely, $i\not=j$ are adjacent iff $e_i+e_j\in H$,  where $e_i$ is the $i$-th unit vector. 
Given $i\in [n]$, let $N(i)$ be the set of vertices adjacent with $i$.    

Let $R_H$ be defined by the following constraints
\begin{align} 
0\leq x_i\,,\, i\in [n] & \,, \nonumber \\
\sum_{i\in [n]} x_i =2 & \,,\label{g2} \\
x_i\leq \sum_{j\in N(i)} x_j,\, i\in [n]\,. &\label{g3}  
\end{align}
There are $2n$ inequalities. It is easy to see they imply $x_i\leq 1$ for every $i\in [n]$ and so $R_H\subseteq [0,1]^n$. Given $e_i+e_j\in H$, the constraints defining $R_H$ are satisfied and so $H\subseteq R_H$. If $\sigma\in \{0,1\}^n\setminus R_H$ then either $\sigma\not \in B^n_2$, and then $\sigma$ falsifies (\ref{g2}), or $\sigma= e_i+e_j$ with $j\not\in N(i)$, and then $\sigma$ falsifies $x_i\leq \sum_{k\in N(i)} x_k$. Hence  $R_H\cap \{0,1\}^n= H$.   
\end{proof}

\begin{theorem}\label{thm:UB} Let $A\subseteq \{0,1\}^n$. Then there exists a polytope $P\subseteq [0,1]^n$ with $\xc(P)=O(2^{n/2})$ and $P\cap \{0,1\}^n= A$. 
\end{theorem}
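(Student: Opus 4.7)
My idea is to realize $A$ as the edge set of a bipartite graph whose vertex universe has size $O(2^{n/2})$, build a small-extension separating polytope for that edge set by a bipartite analogue of Lemma~\ref{lem:graph}, and then pull it back affinely to $\R^n$. The square-root savings come from the fact that the graph has only $2\cdot 2^{n/2}$ vertices, while its ``edges'' encode all of $\bits^n$.

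Split $[n] = I_1 \sqcup I_2$ with $|I_1| = \lceil n/2\rceil$ and $|I_2| = \lfloor n/2\rfloor$, and identify each $x \in \bits^n$ with a pair $(x^{(1)}, x^{(2)}) \in V_1 \times V_2$, where $V_1 := \bits^{I_1}$ and $V_2 := \bits^{I_2}$. Regard $A$ as the edge set $E_A \subseteq V_1\times V_2$ of a bipartite graph. Introduce variables $y_\sigma$ for $\sigma\in V_1$ and $z_\tau$ for $\tau\in V_2$, and let $R\subseteq \R^{V_1\sqcup V_2}$ be the polytope defined by
\begin{align*}
y_\sigma,z_\tau &\ge 0,\qquad \textstyle\sum_{\sigma\in V_1} y_\sigma = 1,\qquad \textstyle\sum_{\tau\in V_2} z_\tau = 1, \\
y_\sigma &\le \textstyle\sum_{\tau:(\sigma,\tau)\in E_A} z_\tau \quad \text{for } \sigma\in V_1, \\
z_\tau &\le \textstyle\sum_{\sigma:(\sigma,\tau)\in E_A} y_\sigma \quad \text{for } \tau\in V_2.
\end{align*}
This is the bipartite analogue of the polytope from Lemma~\ref{lem:graph}, and it has $2(|V_1|+|V_2|)+2 = O(2^{n/2})$ facets. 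Define the affine map $\pi\colon \R^{V_1\sqcup V_2}\to \R^n$ by $\pi(y,z) := \bigl(\sum_\sigma y_\sigma\sigma,\; \sum_\tau z_\tau\tau\bigr)$, reading $\sigma\in V_1$ and $\tau\in V_2$ as Boolean vectors over $I_1$ and $I_2$, and set $P := \pi(R)\subseteq \R^n$. Then $\xc(P)\le \xc(R)=O(2^{n/2})$, and $P\subseteq [0,1]^n$ because each coordinate of $\pi(y,z)$ is a sum of a subset of the $y_\sigma$ or $z_\tau$ and hence lies in $[0,1]$.

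It remains to verify $P\cap \bits^n = A$. The inclusion $A\subseteq P$ is immediate: for $(u,v)\in E_A$ the point $(e_u,e_v)\in R$ satisfies $\pi(e_u,e_v)=(u,v)$. Conversely, assume $\pi(y,z)=x=(x^{(1)},x^{(2)})\in \bits^n$ for some $(y,z)\in R$. For each $k\in I_1$ the equation $\sum_{\sigma:\sigma_k=1} y_\sigma = x^{(1)}_k \in\{0,1\}$, together with $\sum_\sigma y_\sigma = 1$ and $y\ge 0$, kills $y_\sigma$ whenever $\sigma_k\ne x^{(1)}_k$; running this over all $k\in I_1$ forces $y=e_{x^{(1)}}$, and symmetrically $z=e_{x^{(2)}}$. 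Plugging into the bipartite inequality $y_{x^{(1)}}\le \sum_{\tau:(x^{(1)},\tau)\in E_A} z_\tau$ yields $1\le \mathbf{1}[(x^{(1)},x^{(2)})\in E_A]$, so $x\in A$. The delicate step, and what I expect to be the main obstacle, is precisely this rounding argument: Lemma~\ref{lem:graph} as stated imposes only the single mass constraint $\sum y + \sum z = 2$, which is strictly weaker in the bipartite case, so I separate it into $\sum y = \sum z = 1$ so that on the preimage of a Boolean point the distribution $y$ is forced to collapse to a single atom at $x^{(1)}$.
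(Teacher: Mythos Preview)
Your proof is correct and is essentially the paper's argument: the paper also splits $[n]$ in half, encodes $A$ as a bipartite graph on $2\cdot 2^{n/2}$ vertices, applies Lemma~\ref{lem:graph}, and then---exactly as you anticipate in your final paragraph---intersects with the two hyperplanes $\sum y=1$, $\sum z=1$ (called $T$ there) before projecting via the same map $\pi$, invoking the same uniqueness-of-preimage argument you spell out. Your polytope $R$ is just the paper's $R_H\cap T$ written as a single system (the constraint $\sum x=2$ from Lemma~\ref{lem:graph} becomes redundant once the two separate mass constraints are imposed).
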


\begin{proof} Without loss of generality, assume that $n$ is even and $N:= 2^{n/2}$. Assume $A\subseteq \{0,1\}^{[n]}$ and partition $[n]$ into two equal parts $X_1$ and $X_2$. 
Let $F_1:= \{0,1\}^{X_1}$ and $F_2:= \{0,1\}^{X_2}$. 
Hence every $\sigma\in \{0,1\}^{[n]}$ can be uniquely written as $\sigma=\sigma_1\cup \sigma_2$ with $\sigma_1\in F_1$, $\sigma_2\in F_2$.  
We identify $\R^{2N}$ with $\R^{F_1\cup F_2}$, so that the coordinates are indexed by elements of $F_1\cup F_2$. The standard unit vectors are $e_{\sigma_1}, e_{\sigma_2}$, $\sigma_1\in F_1$, $\sigma_2\in F_2$. Let $H\subseteq B^{2N}_2$ be defined as
\[ H:= \{e_{\sigma_1}+e_{\sigma_2}:\, \sigma_1\cup \sigma_2\in A \}\,.\]
Let $R_H$ be the polytope from the previous lemma. We want to express $P$ in terms of $R_H$.

Let $T\subseteq \R^{2N}$ be the intersection of $[0,1]^{2N}$ with the hyperplanes
\begin{equation} \sum_{\sigma_1\in F_1} x_{\sigma_1}=1\,,\,\, \sum_{\sigma_2\in F_2} x_{\sigma_2}=1 \,.\label{g4}\end{equation}
Let $\pi: \R^{2N}\rightarrow \R^n$ be the linear map so that for every $\sigma_1\in F_1$, $\sigma_2\in F_2$,
$\pi(e_{\sigma_1})=\sigma_1\cup 0$ and $\pi(e_{\sigma_2})=0\cup \sigma_2$ (where $0$ is the zero vector in $F_2$ and $F_1$, respectively).
This guarantees 
 \[\pi (e_{\sigma_1}+e_{\sigma_2})= \sigma_1\cup \sigma_2\,.\]  
Moreover, for every $\sigma\in \{0,1\}^n$ with $\sigma=\sigma_1\cup \sigma_2$, $e_{\sigma_1}+e_{\sigma_2}$ is the \emph{unique} vector in $x\in T$ with $\pi(x)=\sigma$. For if $\pi(x)=\sigma$, we have $\sum_{\beta\in F_1} x_\beta = \sigma_1$ and $\sum_{\beta \in F_1} x_\beta=1$ by (\ref{g4}). (Similarly for $\sigma_2$). In other words, $x$ gives a convex combination of $\sigma_1$ in terms of the Boolean vectors $F_1$ which is easily seen to be unique. 

Finally, let $P:= \pi(R_H \cap T)$. Then $\xc(P)\leq 2N$. Given $\sigma\in \{0,1\}^n$, we have $\sigma\in P$ iff $e_{\sigma_1}+e_{\sigma_2}\in R_H$. By the definition of $H$, this is
equivalent to $\sigma\in A$. Hence indeed  $P\cap \{0,1\}^n= A$.
\end{proof}

\subsection{Graphs and Problem \ref{problem1}}

Given a (simple undirected) graph $G$ with vertex set $[n]$, let $P_G\subseteq \R^n$ be the convex hull of characteristic vectors of edges in $G$:
\[P_G=\conv(\{e_i+e_j: \hbox{$i\not =j$ are adjacent in } G\}).\] 
Note that $P_G$ has at most ${n}\choose {2}$ vertices and so $\xc(P_G)$ is at most quadratic. Fiorini et al.  in \cite{FioriniStable} have given an improved bound $\xc(P_G)\leq O(n^2/\log n)$ for any 
graph.  It is however not known whether $\xc(P_G)\leq O(n^c)$ for some constant $c<2$. 
We summarize the connection between this problem and Problem \ref{problem1} as follows:

\begin{prop} Let $A\subseteq \{0,1\}^n$. Then $\xc(\conv(A))\leq O(2^n/n)$. Moreover, assume that for every bipartite graph $G$ on  $2m$ vertices (with the parts of equal size), $\xc(P_G)\leq O(m^c)$, where $c\leq 2$ is an absolute constant. Then $\xc(A)\leq O(2^{cn/2})$   
\end{prop}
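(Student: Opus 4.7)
The plan is to reduce both bounds to bounds on the extension complexity of edge polytopes of bipartite graphs, using essentially the same linear map as in the proof of Theorem~\ref{thm:UB} but without the intermediate polytope $R_H$. Assume $n$ is even (otherwise pad $A$ by a zero coordinate, which does not affect $\xc(\conv(A))$), and set $m := 2^{n/2}$. Partition $[n] = X_1 \cup X_2$ into halves, so that each $\sigma \in \{0,1\}^n$ factors uniquely as $\sigma = \sigma_1 \cup \sigma_2$ with $\sigma_i \in F_i := \{0,1\}^{X_i}$. Form the bipartite graph $G$ on $F_1 \cup F_2$ (parts of size $m$) by declaring $\sigma_1 \sim \sigma_2$ iff $\sigma_1 \cup \sigma_2 \in A$, and let $\pi : \R^{F_1 \cup F_2} \to \R^n$ be the linear map from the proof of Theorem~\ref{thm:UB}, so that $\pi(e_{\sigma_1}) = \sigma_1 \cup 0$, $\pi(e_{\sigma_2}) = 0 \cup \sigma_2$, and hence $\pi(e_{\sigma_1} + e_{\sigma_2}) = \sigma_1 \cup \sigma_2$.

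The central observation is that $\pi(P_G) = \conv(A)$: since $P_G$ is by definition the convex hull of the vectors $e_{\sigma_1} + e_{\sigma_2}$ with $\sigma_1 \cup \sigma_2 \in A$, and $\pi$ is linear, its image on $P_G$ is the convex hull of the corresponding $\sigma_1 \cup \sigma_2$, i.e.\ exactly $\conv(A)$. Since affine images cannot increase extension complexity, we get $\xc(\conv(A)) \leq \xc(P_G)$. The first assertion then follows from the Fiorini et al.\ bound \cite{FioriniStable}, $\xc(P_G) \leq O(N^2/\log N)$ for a graph on $N$ vertices: with $N = 2m = 2^{1 + n/2}$ this is $O(2^n/n)$. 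The second assertion is immediate upon substituting the hypothetical $\xc(P_G) \leq O(m^c) = O(2^{cn/2})$.

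I do not foresee any serious obstacle. The main sanity check is that $\pi$ sends the vertices of $P_G$ bijectively onto $A$, so that no spurious convex-hull points appear; this is guaranteed by the uniqueness of the decomposition $\sigma = \sigma_1 \cup \sigma_2$. Neither the polytope $R_H$ from Lemma~\ref{lem:graph} nor the hyperplane constraints defining $T$ in the proof of Theorem~\ref{thm:UB} are needed here, precisely because we are aiming for $\conv(A)$ itself rather than a separating polytope: the convex hull is automatically preserved under $\pi$, without any need to enforce that Boolean points in $\pi^{-1}(\{0,1\}^n) \cap P_G$ come from $H$ alone.
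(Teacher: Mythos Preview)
Your proof is correct and follows essentially the same approach as the paper: both construct the bipartite graph $G$ on $2\cdot 2^{n/2}$ vertices from $A$, observe that the linear map $\pi$ from Theorem~\ref{thm:UB} sends $P_G$ onto $\conv(A)$, and then invoke the Fiorini et al.\ bound (resp.\ the hypothetical $O(m^c)$ bound). Your remark that $T$ and $R_H$ are unnecessary here is exactly the point---the paper merely notes $P_G\subseteq T$ in passing, but the argument only uses that $\pi$ carries the vertex set of $P_G$ onto $A$.
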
  

\begin{proof} This is analogous to the proof of Theorem \ref{thm:UB}. The set $H$ corresponds to a bipartite graph $G$ on $2N$ vertices with $N=2^{n/2}$. The projection $\pi$ maps vertices of $P_G\subseteq T$ to Boolean vectors in $\R^n$.  Hence $\pi(P_G)=\conv(A)$ and $\xc(\conv(A))\leq \xc(P_G)$. From \cite{FioriniStable}, Lemma 3.4, we know $\xc(P_G)\leq O(N^2/\log N)$ which gives  $\xc(\conv(A))\leq O(2^n/n)$; the ``moreover'' part is similar. 
\end{proof}

An explicit description of $P_G$ in terms of linear inequalities can be found in \cite{EdgeHull,EdgeHull2}. Apart from the general constraints $\sum x_i=2$, $0\leq x_i$, every inequality 
$\sum_{i\in S}x_i\leq 1$ is valid whenever $S$ is an independent set. In the case of \emph{bipartite} $G$, this indeed gives a complete description of $P_G$. This also means that $P_G$ can have exponentially many facets -- in particular, the polytope from Lemma \ref{lem:graph} must be strictly larger than $P_G$ for some $G$.  

The lemma can be somewhat strengthened when considering independent sets of size $2$. Let $Q_G$ be the polyhedron defined by the constraints  $x_i+x_j\leq 1$ for every $i\not=j$ not adjacent in $G$. Clearly, $P_G\subseteq Q_G$ and they contain the same set of Boolean vectors of Hamming weight two (i.e., the edges of $G$). 

\begin{remark} \label{remark1} Let $G$ be a bipartite $n$-vertex graph. Then there exists a polytope $R'_G$ with $O(n)$ facets with $P_G\subseteq R'_G\subseteq Q_G$.
\end{remark}  

\begin{proof} Let $L$ and $R$ be the parts of $G$ with $L\cup R=[n]$. Then $R'_G$ defined by the following constraints has the desired properties:
\begin{align*} 
0\leq x_i\,,\, i\in [n]\,, & \\
\sum_{i\in L} x_i =1\,, \sum_{i\in R} x_i =1\,,  & \\
x_i+ \sum_{j\in R\setminus N(i)}x_j\leq 1,\, i\in L\,. &  
\end{align*}
 
\end{proof}

Using the machinery of non-negative rank factorizations of slack matrices (see, e.g., \cite{Yannanakis,Rothvoss,FioriniExp}), the quantity $\xc(P_G)$ can be captured by the nonnegative rank of an explicit matrix $\mathsf{EIS}_G$:
its rows are indexed by edges $e$ of $G$, columns by independent sets $S$. The entry corresponding to $e$ and $S$ equals $1$, if $e$ and $S$ are disjoint, and $0$ otherwise. 
This matrix is intimately related to the famous Clique vs Independent set problem of Yannakakis \cite{Yannanakis}; see also \cite{GoosCIS}. If $G$ is bipartite, $\xc(P_G)$ corresponds to the non-negative rank of $\mathsf{EIS}_G$. 
An interesting submatrix of  $\mathsf{EIS}_G$ is the $\mathsf{ENE}_G$ matrix obtained by restricting the columns to independent sets of size two (i.e., non-edges). A similar matrix has been considered in \cite{KushilevitzENE} from the point of view of communication complexity. Since $\mathsf{ENE}_G$ can have size $O(n^2)\times O(n^2)$, one may perhaps hope to obtain quadratic lower bounds on $\xc(P_G)$ using the non-negative rank of $\mathsf{ENE}_G$. We note that this is impossible\footnote{This could also be concluded from Remark \ref{remark1}}. 

\begin{remark}\label{remark2} If $G$ is a bipartite $n$-vertex graph, $\mathsf{ENE}_G$ can be written as a sum of $O(n)$ 0/1-matrices. For a non-bipartite $G$, the bound is $O(n\log n)$. 
\end{remark}   

\begin{proof}  Let $G$ be a bipartite graph on vertices $L\cup R$. Let $E$ be the set of edges of $G$ and $\bar{E}$ the set of non-edges.  
Given $\ell\in L$, $r\in R$, define the following sets $A_\ell, B_r, C_\ell \subseteq E\times \bar E$ of edge/non-edge pairs. 
\begin{enumerate}\item
$A_\ell$ consists of pairs with $e=\{\ell_1,r_1\}$, $\bar e= \{\ell, r_2\}$ with $\ell\not=\ell_1\in L$, $r_1,r_2\in R$ and $r_1\in N(\ell)$.   
\item
$B_r$ consists of  pairs $e= \{\ell_1, r\}$, $\bar e= \{v_1,v_2\}$, where either $v_1\not= v_2 \in R\setminus\{r\}$, or $v_1\in L$, $v_2\in R\setminus \{r\}$ and $v_1\not \in N(r)$. 
\item
$C_\ell$ are the pairs $\{\ell,r_1\}$, $\{\ell_1,\ell_2\}$ with $\ell_1\not =\ell_2\in L\setminus \{\ell\}$.  
\end{enumerate}
It is easy to see that the sets form a partition of the set of disjoint edge/non-edge pairs. Moreover, each of the sets is a product set (of the form $C\times C'$ with $C\subseteq E, C'\subseteq \bar E$). Identifying a subset of $E\times \bar E$ with the 0/1-matrix representing its characteristic function, we can thus write \[\mathsf{ENE}_G=\sum_\ell A_\ell+\sum_r B_r+\sum_\ell C_\ell\,,\]  
where the summands are 0/1-matrices of rank one.

A general $n$-vertex graph can be expressed as an edge-disjoint union of a bipartite graph and two graphs with $\lceil n/2\rceil$ vertices, and we can proceed by induction.  
\end{proof}

\subsection{The lower bound}

Our proof of the lower bound from Theorem \ref{thm:main} uses Warren's estimate on the number of sign patterns of a polynomial map, and Alon's bound on the number of combinatorial types of polytopes. We overview these results first.  

For $b \in \R $ define $$\sgn(b) = \begin{cases}  1, & b > 0, \\ 0, & b = 0 \\ -1, & b<0. \end{cases}$$ For a sequence $f = \langle f_1(y_1, \ldots, y_p), \ldots, f_m(y_1, \ldots, y_p)\rangle$ of real functions and $b \in \R ^p$, let $\sgn(f(b)) := \langle \sgn(f_1(b)), \ldots, \sgn(f_m(b))\rangle \in \{-1, 0, +1\}^m$, which we call the {\it sign-pattern of $f$ at $b$}. A result of Warren and its extension by Alon gives a bound on the number of sign patterns when $f_i$ are  polynomials of degree at most $d$.

\begin{theorem}[Warren \cite{warren}, Alon \cite{AlonTools}]
\label{thm:warren}
Let $f$ be a sequence of $m$ polynomials of degree at most $d\geq 1$ in the same set of $p$ variables with $2m\geq p$. Then $|\{\sgn(f(b)) : b \in \R^p\}| \le (\frac{8edm}{p})^p$.
\end{theorem}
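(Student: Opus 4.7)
The plan is to prove the bound in two stages. First, bound the number of \emph{strict} sign patterns (in $\{-1,+1\}^m$) by the number of connected components of $\R^p \setminus \bigcup_{i=1}^m Z(f_i)$, where $Z(f_i):=\{b\in\R^p: f_i(b)=0\}$; each strict sign pattern is constant on such a component. Second, reduce the count of arbitrary patterns in $\{-1,0,+1\}^m$ to the strict case via a perturbation trick that doubles the number of polynomials.

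For the first stage I would follow Warren and invoke an Oleinik-Petrovsky-Milnor-Thom type bound: the sum of the Betti numbers of the real variety $\{g=0\}$ for a polynomial $g$ of degree $D$ in $p$ variables is at most $D(2D-1)^{p-1}$. One smoothly approximates the complement of the arrangement by a sublevel set of the form $\{b: \prod_i f_i(b)^2 \le \varepsilon\}$ for a generic small $\varepsilon>0$, applies Alexander duality and the Milnor bound, and optimizes the dimension of the approximating variety. This yields an upper bound of the shape $(4edm/p)^p$ on the number of strict sign patterns under the hypothesis $m\ge p$.

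For the second stage, introduce an auxiliary real variable $t$ and consider the $2m$ polynomials $F_i^\pm(b,t):=f_i(b)\pm t$ in $p+1$ variables, each still of degree at most $d$. For any sign pattern $\sigma\in\{-1,0,+1\}^m$ realized at some $b_0\in\R^p$ and any sufficiently small $t_0>0$, the $2m$-tuple of strict signs $\bigl(\sgn F_i^+(b_0,t_0),\sgn F_i^-(b_0,t_0)\bigr)_{i}$ lies in $\{-1,+1\}^{2m}$ and uniquely encodes $\sigma$: both entries are $+1$ iff $\sigma_i=+1$, both are $-1$ iff $\sigma_i=-1$, and they differ iff $\sigma_i=0$. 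This produces an injection from realized sign patterns of $f$ into realized strict sign patterns of the enlarged family $F^\pm$ in $p+1$ variables; applying the first-stage bound to $2m$ polynomials in $p+1$ variables then yields the claimed $(8edm/p)^p$ estimate, after routine manipulation of the ratio $2m/(p+1)$ for which the hypothesis $2m\ge p$ is the natural one.

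The main obstacle is the first stage: the Milnor-type estimate of Betti numbers uses nontrivial real algebraic geometry, and care is needed both in constructing the smooth approximation to the arrangement complement and in tracking the constant so as to recover the factor $4ed$ rather than a larger expression. The perturbation step is comparatively clean; its only subtle point is that the threshold $t_0$ depends on the witness $b_0$, which is precisely why $t$ must be promoted to a variable rather than fixed as a parameter.
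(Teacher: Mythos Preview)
The paper does not prove this theorem; it is cited from Warren and Alon and used as a black box, so there is no proof in the paper to compare your proposal against. Your two-stage outline is indeed the standard route taken in those references.

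One concrete correction to your second stage: promoting $t$ to a new variable and applying the first-stage bound to $2m$ polynomials in $p+1$ variables gives $(4ed\cdot 2m/(p+1))^{p+1}=(8edm/(p+1))^{p+1}$, which exceeds $(8edm/p)^p$ by a factor of roughly $8dm/(p+1)\ge 4$; the ``routine manipulation'' you allude to does not close this gap. Alon's actual trick keeps the variable count at $p$: since only finitely many sign patterns $\sigma$ are realized, first fix one witness $b_\sigma$ for each realized $\sigma$, and then choose a single $\varepsilon>0$ smaller than every nonzero $|f_i(b_\sigma)|$. The $2m$ polynomials $f_i\pm\varepsilon$ are then polynomials in the original $p$ variables, and the strict sign pattern of this family at each $b_\sigma$ still encodes $\sigma$. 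Warren's bound applied to $2m$ polynomials of degree at most $d$ in $p$ variables now gives exactly $(8edm/p)^p$. Your concern that the threshold depends on the witness is handled not by adding a variable but by the finiteness of the set of realized patterns.
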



Given a polytope $P$, the {\it face lattice} of $P$, $L(P)$, is the poset of the faces of $P$ ordered by inclusion (including $\emptyset$ and $P$ itself).  
It is naturally equipped with join and meet operations, hence it is a lattice.  
See, e.g., \cite{MR1311028} for details.
The lattice-isomorphism equivalence class of  $L(P)$ captures the {\it combinatorial type} of $P$.

\begin{theorem}[Alon \cite{MR859498}]
The number of non-isomorphic face lattices arising from polytopes with $r$ vertices is at most $2^{r^3(1 + o(1))}$.
\end{theorem}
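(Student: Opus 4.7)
The plan is to bound the number of face lattices by bounding the number of chirotopes (oriented matroids) realized by $r$-point configurations in $\R^d$, apply Warren's theorem for each fixed $d$, and sum over $d$.

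First, observe that any polytope with $r$ vertices has dimension at most $r-1$, so I would count face lattices separately for each $d \in \{1, \ldots, r-1\}$ and then sum. This sum contributes only a factor of $r$, which is absorbed into the $2^{o(r^3)}$ slack.

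Second, fix $d$ and parametrize an ordered configuration of $r$ points in $\R^d$ by the vector $y \in \R^{rd}$ of their coordinates. Form the $(d+1) \times r$ matrix $M(y)$ whose $i$-th column is the homogenized vertex $(v_i, 1)^\top$. Each $(d+1) \times (d+1)$ minor of $M(y)$ is a polynomial in $y$ of degree $d$ (since one row of $M$ is the all-ones vector and contributes no degree). The sign vector of all $\binom{r}{d+1}$ such minors is the \emph{chirotope} of the configuration, and a classical fact from oriented matroid theory says that the face lattice of $\mathrm{conv}(v_1, \ldots, v_r)$ is determined by the chirotope. Hence the number of distinct face lattices arising from $d$-dimensional $r$-vertex polytopes is at most the number of distinct sign patterns of these polynomials.

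Third, apply Theorem \ref{thm:warren} with $m = \binom{r}{d+1} \le 2^r$, degree $d$, and $p = rd$ variables (the hypothesis $2m \ge p$ is routine for nontrivial $d$). This yields
\[
\left(\frac{8 e d \binom{r}{d+1}}{rd}\right)^{rd} \;\le\; \left(\frac{8 e \cdot 2^r}{r}\right)^{rd} \;=\; 2^{rd(r + O(\log r))} \;=\; 2^{r^2 d(1+o(1))}
\]
realizable chirotopes, and hence at most that many face lattices for fixed $d$. Taking $d \le r-1$ and summing over $d$ gives the claimed bound $2^{r^3(1+o(1))}$.

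The main conceptual ingredient is the oriented-matroid fact that the face lattice is determined by the chirotope; once this is in hand, the estimate is a clean application of Warren's theorem, with the delicate step being to express the combinatorial data as the sign pattern of polynomials of the \emph{right} degree (degree $d$ rather than $d+1$, which is essential to avoid an extra logarithmic factor in the exponent).
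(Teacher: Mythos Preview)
The paper does not prove this theorem; it is quoted as a result of Alon and used as a black box. Your sketch is essentially Alon's original argument and is correct: the face lattice of $\conv(V)$ is indeed determined by the chirotope (order type) of $V$, and Warren's bound applied to the $\binom{r}{d+1}$ affine-dependence determinants in $rd$ variables gives the estimate. Two small points. First, the hypothesis $2m \ge p$ of Theorem~\ref{thm:warren} actually fails for $d \in \{r-2, r-1\}$ (e.g., for $d=r-1$ one has $m=1$), but those dimensions contribute only polynomially many combinatorial types and can be disposed of separately, so your ``routine'' is justified. Second, your closing remark that degree $d$ rather than $d+1$ is ``essential'' is not accurate: replacing $d$ by $d+1$ inside the Warren bound changes the count by a factor of at most $((d+1)/d)^{rd}\le 2^{O(r)}$, which is swallowed by the $o(1)$ in the exponent.
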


By duality, this implies:

\begin{corollary}
\label{cor:type}
The number of non-isomorphic face lattices arising from polytopes with $r$ facets is at most $2^{r^3(1 + o(1))}$.
\end{corollary}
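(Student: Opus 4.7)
The plan is to deduce the corollary from Alon's theorem by invoking polar duality. Given a polytope $P \subseteq \R^n$ with $r$ facets, I would first reduce to the case where $P$ is full-dimensional and contains the origin in its interior: passing to the affine hull of $P$ makes it full-dimensional, and translating by an interior point does not change the combinatorial type. Under these normalizations, the polar dual
\[ P^{*} := \{\, y \in \R^n : \langle x, y \rangle \le 1 \text{ for all } x \in P \,\} \]
is itself a bounded, full-dimensional polytope.

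Next, I would invoke the standard fact from polytope theory (see, e.g., \cite{MR1311028}) that polar duality induces an inclusion-reversing bijection between the faces of $P$ and the faces of $P^{*}$, sending $k$-dimensional faces of $P$ to $(n-1-k)$-dimensional faces of $P^{*}$; in particular, the $r$ facets of $P$ correspond to the $r$ vertices of $P^{*}$. Consequently the face lattice $L(P^{*})$ is precisely the opposite lattice $L(P)^{\mathrm{op}}$. Since $L \mapsto L^{\mathrm{op}}$ is an involution on isomorphism classes of finite lattices, the map $P \mapsto P^{*}$ gives a bijection between isomorphism classes of face lattices of polytopes with $r$ facets and isomorphism classes of face lattices of polytopes with $r$ vertices. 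Applying Alon's theorem to the latter yields the desired bound $2^{r^{3}(1+o(1))}$.

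There is no real obstacle here: the argument is essentially a one-line dualization, and the only points to check (that $P$ can be normalized without altering its combinatorial type and that polar duality swaps vertices with facets) are textbook. The corollary is a genuine ``by duality'' statement.
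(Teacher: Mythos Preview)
Your argument is correct and is exactly what the paper intends: the paper's proof is literally the phrase ``by duality,'' and you have simply spelled out the standard polar-duality justification.
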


We now proceed to prove the lower bound from Theorem \ref{thm:main}. 
We call a set $S\subseteq \R^n$ \emph{full-dimensional} if no hyperplane in $\R^n$ contains $S$. Note that if $A$ is full-dimensional then so is any separating $P$ for $A$. 

\begin{lemma}
\label{lm:full-dim}
There are at least $2^{2^n(1 - o(1))}$ full-dimensional subsets of $\bits^n$.
\end{lemma}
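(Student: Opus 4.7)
The plan is to bound the number of \emph{non}-full-dimensional subsets of $\{0,1\}^n$ and show they form a vanishing fraction of the $2^{2^n}$ total subsets; the complement then gives the claimed lower bound.

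First I would observe that $A\subseteq\{0,1\}^n$ is non-full-dimensional if and only if $A$ is contained in some affine hyperplane $H\subset\R^n$, so $|A|\le |H\cap\{0,1\}^n|$. The key input is the elementary half-cube bound: any affine hyperplane in $\R^n$ contains at most $2^{n-1}$ Boolean points. To see this, write $H$ as $\sum_{i=1}^n a_ix_i=b$, pick any coordinate $j$ with $a_j\ne 0$, and note that each of the $2^{n-1}$ Boolean assignments to the remaining coordinates determines $x_j$ uniquely, giving at most $2^{n-1}$ Boolean solutions.

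Combining these two facts, every non-full-dimensional $A$ satisfies $|A|\le 2^{n-1}$, so the number of non-full-dimensional subsets is at most
\[\sum_{k=0}^{2^{n-1}}\binom{2^n}{k}.\]
By the standard symmetry identity this sum equals $2^{2^n-1}+\tfrac12\binom{2^n}{2^{n-1}}$, and Stirling gives $\binom{2^n}{2^{n-1}}=O(2^{2^n}/2^{n/2})$, so the sum is $2^{2^n-1}(1+o(1))$. Subtracting from $2^{2^n}$ leaves at least $2^{2^n-1}(1-o(1))=2^{2^n(1-o(1))}$ full-dimensional subsets, as desired.

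There is no genuine obstacle here; the argument is routine counting, and the only substantive ingredient is the half-cube intersection bound, which follows directly from the defining linear equation of $H$. A slightly sharper bound could be obtained by enumerating hyperplanes explicitly (each determined by $n$ affinely independent cube points, giving at most $2^{n^2}$ relevant hyperplanes, each contributing $2^{2^{n-1}}$ subsets), but this refinement is not needed for the $2^{2^n(1-o(1))}$ bound.
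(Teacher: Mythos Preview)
Your argument is correct but takes a different route from the paper's. The paper simply observes that any subset containing the origin and the $n$ unit vectors is full-dimensional, and there are $2^{2^n-n-1}=2^{2^n(1-o(1))}$ such subsets; this is a one-line proof. Your approach---bounding the cardinality of any non-full-dimensional subset by $2^{n-1}$ via the hyperplane intersection bound and then counting subsets of that size---is more laborious but yields a slightly stronger conclusion (asymptotically half of all subsets are full-dimensional, rather than a $2^{-n-1}$ fraction). For the statement as written the paper's direct argument is preferable; your complement-counting method would be the natural choice if one needed to show that a \emph{random} subset of $\{0,1\}^n$ is full-dimensional with probability tending to~$1$.
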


\begin{proof}
If $A$ contains $0$ and the $n$ unit vectors, it is full-dimensional. There are $2^{2^n-n-1}$ such $A$'s. 
\end{proof}


\begin{lemma}
\label{lm:func-count}

For every $m\geq n$ there are polynomials $f_1, \ldots, f_s$  
in $mn$ variables such that \begin{enumerate} 
\item $s = O(m^{n+1})$, each $f_i$ has degree at most $n$ and has at most $2^{O(n\log n)}$ non-zero coefficients, 
\item for every set $V \in \R^{m\times n}$ viewed as $m$ points in $\R^n$, if $\conv(V)$ is full-dimensional then the set $\conv(V) \cap \{0, 1\}^n$ is uniquely determined by $$\langle\sgn(f_1(V)), \ldots, \sgn(f_s(V))\rangle.$$ 
\end{enumerate}

\end{lemma}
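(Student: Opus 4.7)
The plan is to read off whether each $\sigma \in \{0,1\}^n$ lies in $\conv(V)$ from the signs of suitably chosen determinants that record, for each candidate facet-hyperplane of $\conv(V)$, on which side a given test point lies. For every $n$-subset $T = \{i_1, \ldots, i_n\} \subseteq [m]$ with a fixed ordering, every $j \in [m] \setminus T$, and every $\sigma \in \{0,1\}^n$, I take as my polynomials the $n \times n$ determinants
$$
D_{T, j}(V) = \det(v_{i_2} - v_{i_1}, \ldots, v_{i_n} - v_{i_1}, v_j - v_{i_1}),
$$
$$
D_{T, \sigma}(V) = \det(v_{i_2} - v_{i_1}, \ldots, v_{i_n} - v_{i_1}, \sigma - v_{i_1}).
$$
Each is a polynomial in the $mn$ coordinates of $V$ of degree at most $n$. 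Leibniz expansion gives $n!$ permutation terms, each a product of entries that are integer combinations of at most two of the $v_\ell$-coordinates (with $\sigma$ contributing only $0/1$ constants); so each term expands into at most $2^n$ monomials, yielding at most $n! \cdot 2^n = 2^{O(n \log n)}$ nonzero coefficients, as required.

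For the count, there are $\binom{m}{n}$ subsets $T$, and for each $T$ at most $m - n$ polynomials of the first type and $2^n$ of the second, giving $s \leq \binom{m}{n}(m - n + 2^n) = O(m^{n+1})$, with the implicit constant allowed to depend on $n$ (for $m \geq n$ the Boolean-point contribution $2^n m^n$ is absorbed as $(2^n/n)\cdot m^{n+1}$).

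For the reconstruction step, full-dimensionality of $\conv(V)$ ensures that every facet lies on an affine hyperplane spanned by some $n$-subset of $V$. I will call $T$ a \emph{facet-carrier} if the multiset $\{\sgn(D_{T, j}) : j \notin T\}$ is nonzero somewhere and contains no two opposite nonzero values; the common nonzero sign $\epsilon_T \in \{\pm 1\}$ then identifies the interior halfspace bounded by the hyperplane spanned by $T$. Using the closed halfspace representation of a full-dimensional bounded polytope, I obtain the membership rule
$$
\sigma \in \conv(V) \iff \sgn(D_{T, \sigma}) \in \{0, \epsilon_T\} \text{ for every facet-carrier } T.
$$
Both the identification of facet-carriers and the right-hand side depend only on the sign vector, so this rule lets me recover $\conv(V) \cap \{0,1\}^n$ from $\langle \sgn(f_1(V)), \ldots, \sgn(f_s(V))\rangle$.

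The main obstacle will be bookkeeping of degenerate configurations permitted by the hypothesis (which assumes only full-dimensionality, not general position of $V$): affinely dependent $n$-subsets cause all $D_{T, \cdot}$ to vanish simultaneously, $n$-subsets whose affine hull cuts through $\conv(V)$ produce $D_{T,j}$ of both signs, and a Boolean point lying on some facet yields $\sgn(D_{T, \sigma}) = 0$. The rule above discards or correctly handles each of these cases without extra polynomials, but the case analysis is the one place where one must be slightly careful.
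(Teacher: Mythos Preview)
Your proposal is correct and follows essentially the same approach as the paper. Both arguments enumerate $n$-subsets $T$ of $V$, use degree-$n$ polynomials to decide whether the hyperplane through $V_T$ supports $\conv(V)$, and then test each $\sigma\in\{0,1\}^n$ against every such supporting halfspace. Your determinants $D_{T,j}$ and $D_{T,\sigma}$ are exactly the paper's polynomials in $F_S$ and $\sum_i a_{S,i}\sigma_i - b_S$, since expanding $D_{T,x}$ along the last column yields $\sum_i a_{T,i}x_i - b_T$ with the $a_{T,i}$ the signed cofactors. The one cosmetic difference is that the paper additionally lists the coefficients $a_{S,1},\dots,a_{S,n}$ to flag affine dependence of $V_S$, whereas you detect this case implicitly from the all-zero pattern of the $D_{T,j}$; your handling is correct and slightly more economical. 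Your parenthetical about the implicit constant depending on $n$ is harmless but unnecessary: since $2^n/n!$ is absolutely bounded, $\binom{m}{n}(m-n+2^n)\le (m+2^n)m^n/n! = O(m^{n+1})$ with an absolute constant.
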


\begin{proof}

We will construct a set of polynomials such that for any $V = \{ v_1, \ldots, v_m\}$, we can determine $\conv(V) \cap \{0, 1\}^n$ by evaluating the signs of these polynomials on $V$. The idea is as follows. For every set $V'$  of $n$ points from $V$, we can compute the unique hyperplane $H$ passing through $V'$ (if one exists). If all points in $V$ lie in the same closed half-space determined by $H$, then $\conv(V) \cap H$ is a facet of $\conv(V)$. Let us call such closed half-space {\it good}. Then,  given $\sigma \in \{0, 1\}^n$, we can determine whether $\sigma\in \conv(V)$ by checking whether it appears in all good half-spaces. 

We now formally define our set of polynomials. 
Given $S \in \binom{[m]}{n}$ and $V\in \R^{m\times n}$, let $V_S$ be the set of vectors $\{v_i: i\in S\}$.  
We start by constructing the following polynomials/sets of polynomials. They take $V_S$ as input, but we hide the dependence.     

\begin{enumerate}
\item\label{a1} $a_{S,1},\dots a_{S,n}$ are polynomials of degree $n-1$  such that $V_S$ is affinely independent iff some $a_{S,i}$ is non-zero.
\item\label{a2}  
 $b_{S}$ is a polynomial  of degree $n$ such that whenever $V_S$ is affinely independent then  
$H_{S}(V):=\{x\in \R^n: \sum_i a_{S,i} x_i = b_{S}\}$ is the unique hyperplane passing through $V_S$, 
\item\label{a3} $F_S$ is a set of $m-n$ polynomials of degree $n$ such that 
if $V_S$ is affinely independent, then $\conv(V) \cap H_S(V)$ is a facet of $\conv(V)$ iff all polynomials in $F_S$ are all non-positive or all non-negative.
\end{enumerate}

Parts \ref{a1}, \ref{a2} are an exercise in linear algebra. 
$F_S$ is obtained by evaluating the hyperplane equation from \ref{a2} on all points from $V\setminus V_S$ -- the hyperplane defines a facet if all points in $V$ lie on the same side. 
 
Let $F$ the set of polynomials containing $F_S$, $a_{S,1},\dots, a_{S,n}$ and 
\begin{equation*}\sum_i a_{S,i}\sigma_i - b_{S}\,,\label{eq:side}\end{equation*}
for every $S\in \binom{[m]}{n}$ and $\sigma\in \bits^n$. Then $\conv(V)\cap \bits^n$ is uniquely determined by the signs of polynomials in $F$. 
The number of polynomials is $(2^n+n+1+(m-n))\binom{m}{n}\leq (2^n+m+1) \frac{m^n}{n!}\leq O(m^{n+1})$ and their degrees are at most $n$. The bound on the number of non-zero coefficients follows by noting that each polynomial depends on $O(n^2)$ variables. 
\end{proof}

Before proceeding to the next lemma, let us make some comments about rational functions. Given an $n$-variate rational function $f=g/h$ with $g,h$ coprime polynomials and $h\not=0$, define its degree as the maximum of the degrees of $g$ and $h$. Thus $f$ defines a partial function $:\R^n\rightarrow \R$. Warren's estimate can be extended to rational functions as follows. Given a rational map $f=\langle f_1,\dots, f_m \rangle$ from $\R^p$ to $\R^m$ with each $f_i$ of degree at most $d\geq 1$, we have 
\begin{equation}\label{eq:Warren2} 
|\{\sgn(f(b)) : b \in \R^p\,,\, f(b) \hbox{ is defined}\}| \le (cdm)^p\,,\end{equation}
where $c>0$ is an absolute constant. This follows from Theorem \ref{thm:warren} by  considering signs of numerators and denominators separately.\footnote{We also have no assumption on $p$ since the number of sign patterns can be trivially bounded by $3^p$.} 

Furthermore, we need the following estimate on the degree of composition. Suppose that $f(x_1,\dots,x_m)$ is a polynomial of degree $d_1$ with $k$ non-zero coefficients and $g_1,\dots, g_m$ are rational functions of degree at most $d_2$. Then it is easy to see that the degree of $f(g_1,\dots,g_m)$ is at most $kd_1d_2$. 

\begin{lemma}
\label{lm:type-count}
Let $L$ be a face lattice of a $d$-dimensional polytope with $r\geq d$ facets. Assume $d\geq n$ and let $S_L$ be the set of $A\subseteq \bits^n$ such that $A$ is full-dimensional and there exists a polytope $Q$ in $\R^d$ with combinatorial type $L$ such that the projection of $Q$ on the first $n$ coordinates is separating for $A$. Then $|S_L|\leq 2^{O(nr^3)}$.   
\end{lemma}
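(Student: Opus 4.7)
The plan is to fix the combinatorial type $L$, parametrize every polytope $Q\subseteq\R^d$ of type $L$ by the coefficients defining its $r$ facet-hyperplanes, express the separating set $A=\pi(Q)\cap\bits^n$ as a sign pattern of polynomials in these coefficients, and then count sign patterns via the rational-function Warren bound in equation~(\ref{eq:Warren2}).

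In more detail, first observe that such a $Q$ is specified by $p:=r(d+1)=O(r^2)$ real parameters (using $d\le r$), and the combinatorial data of $L$ identifies, for each of the $v$ vertices $w_j$ of $L$, which $d$ facets are incident to $w_j$. By Cramer's rule, each coordinate of each $w_j$ is therefore a rational function of the parameters of degree at most $d\le r$, and by McMullen's upper bound theorem $v\le r^{O(r)}$. Next, since $A$ is full-dimensional and $\pi(Q)=\conv(\pi(w_1),\ldots,\pi(w_v))$, applying Lemma~\ref{lm:func-count} with $m=v$ (note $v\geq d+1\geq n+1$) yields $s=O(v^{n+1})$ polynomials in the $vn$ projected-vertex coordinates, each of degree at most $n$ and with $2^{O(n\log n)}$ non-zero coefficients, whose joint sign pattern determines $A$. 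Substituting the rational vertex-coordinate functions into these polynomials produces, by the composition-degree estimate preceding the lemma, $s$ rational functions of the $p$ facet parameters of degree at most $D:=n\cdot d\cdot 2^{O(n\log n)}$.

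Finally, the rational-function Warren estimate bounds the number of distinct sign patterns of these $s$ functions in $p$ variables by $(csD)^p$. Plugging in $s\le v^{O(n)}=2^{O(nr\log r)}$, $D\le nr\cdot 2^{O(n\log n)}$ and $p=O(r^2)$, the exponent works out to $O(nr^3)$ after absorbing sub-leading polylogarithmic factors into the big-$O$; since each $A\in S_L$ is uniquely determined by one such sign pattern, this gives $|S_L|\le 2^{O(nr^3)}$.

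The main obstacle I anticipate is controlling degree blow-up across the two-stage composition: projected vertex-coordinates are themselves rational functions of degree up to $d\le r$, while the polynomials supplied by Lemma~\ref{lm:func-count} already have exponentially many monomials of degree $n$, and there can be as many as $r^{O(r)}$ vertices. Without the quantitative control on the number of monomials ($2^{O(n\log n)}$) and on the vertex count $v$ (via the upper bound theorem), Warren's estimate would yield a useless bound; simultaneously tracking $s$, $D$, and $p$ to keep the exponent at $O(nr^3)$ is the heart of the proof.
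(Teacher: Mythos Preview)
Your approach is essentially identical to the paper's: parametrize the facet hyperplanes by $p=O(r^2)$ reals, express each vertex via Cramer's rule as a degree-$d$ rational function of these parameters, feed the projected vertices into Lemma~\ref{lm:func-count}, compose, and count sign patterns via the rational Warren bound~(\ref{eq:Warren2}). The structure and all the key quantitative ingredients match.

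There is one slip in the arithmetic. You bound the number of vertices by $v\le r^{O(r)}$ via McMullen, which gives $\log s = O(nr\log r)$ and hence a final exponent of $O(nr^3\log r)$; the extra $\log r$ cannot be ``absorbed'' into a big-$O$ of $nr^3$. The fix is trivial and is what the paper uses: any polytope with $r$ facets has at most $2^r$ vertices, since each vertex is determined by the subset of facets containing it (equivalently, McMullen's bound is a sum of binomial coefficients in $r$, hence $\le 2^{r+1}$). With $v\le 2^r$ you get $\log s=O(nr)$ and the exponent becomes $O(r^2)\cdot\bigl(O(nr)+O(n\log n)+O(\log r)\bigr)=O(nr^3)$, exactly as stated.
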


\begin{proof}
Consider a polytope $Q$ in $\R^d$ with face lattice $L$. Since $Q$ is full-dimensional, we can write it as $\{y\in \R^d: By \leq b\}$ where $b\in \R^r$, $B\in \R^{r\times d}$. Hence $Q$ can be described using $p:=(r+1)d \leq O(r^2)$ constants $z=\langle B,b\rangle$. Let $U$ be the vertices of $Q$. Then $|U|\leq 2^r$.    
Every vertex is the unique intersection of $d$ hyperplanes defining facets of the polytope. Furthermore, the lattice $L$ specifies for each vertex, which facets it is contained in and, moreover, which $d$ of them have the desired unique intersection (see, e.g., \cite{MR1311028}). For each vertex $u\in U$, canonically pick $d$ such facets. Then $u$ is the unique solution to a system of $d$ linear equations, and its coordinates can be seen as rational functions of $z$. More exactly, using Cramer's rule, we can write $u(z)= u_0(z)^{-1}\langle u_1(z),\dots, u_d(z) \rangle$, where $u_0,\dots,u_d$ have degree $d$. Note that $u_0(z)$ is non-zero whenever the polytope described by $z$ is indeed of type $L$. 

Project $Q$ on the first $n$ coordinates to obtain $P\subseteq \R^n$. We want to specify which elements of the Boolean cube are contained in $P$. Let $V$ be the projection of the vertices of $Q$ so that $P = \conv(V)$. Assume that $P$ is full-dimensional (otherwise it cannot contain a full-dimensional $A$). Lemma \ref{lm:func-count} gives us a set of polynomials $f_1(V), \ldots, f_s(V)$ whose sign pattern determines $P \cap \{0, 1\}^n$. We also have $s\leq O(|U|^{n+1})\leq 2^{O(rn)}$, and each $f_i$ has degree at most $n$ and $2^{O(n\log n)}$ non-zero coefficients.  The coordinates of vertices of $V$ are degree $d$ rational functions of $z$, hence $f_i(V(z))$ is  a rational functions of $z$ of  degree at most $d'\leq dn 2^{O(n\log n)}   \le r2^{O(n\log n)}$.  By  (\ref{eq:Warren2}),  the number of sign patterns of $\langle f_1(V(z)), \ldots, f_s(V(z))\rangle$ 
can be bounded by $(c's d')^{p}$. Since $s\leq 2^{O(rn)}$, $p\leq O(r^2)$, $d'\leq r2^{O(n\log n)}$, and $n\leq r$, the bound can be written as  $2^{O(r^3 n)}$.  This gives the desired estimate on $|S_L|$.
\end{proof}

\begin{theorem}
\label{thm:main-lb}
There exists $A\in \{0, 1\}^n$ such that $\sep(A) \ge 2^{\frac{n}{3}(1 - o(1))}$.
\end{theorem}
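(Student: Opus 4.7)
The strategy is a union-bound counting argument that ties together the three preparatory lemmas with Alon's bound on the number of combinatorial types of polytopes. Suppose for contradiction that every full-dimensional $A\subseteq \bits^n$ admits a separating polytope of extension complexity at most $r$. For each such $A$, fix a witness separating polytope $P\subseteq\R^n$, an affine map $\pi:\R^d\to\R^n$, and a polytope $Q\subseteq\R^d$ with $r'\leq r$ facets so that $P=\pi(Q)$. By replacing $Q$ with its affine span if necessary, we may assume $Q$ is full-dimensional in $\R^d$. Since $P$ is full-dimensional we have $d\geq n$, and since a bounded full-dimensional polytope in $\R^d$ requires at least $d+1$ facets, $n\leq d<r'\leq r$, which is exactly the regime covered by Lemma~\ref{lm:type-count}.

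Next I would apply Corollary~\ref{cor:type} and Lemma~\ref{lm:type-count} in sequence. For each of the at most $r^2$ admissible pairs $(d,r')$, Corollary~\ref{cor:type} supplies at most $2^{(r')^3(1+o(1))}\leq 2^{r^3(1+o(1))}$ non-isomorphic face lattices. Lemma~\ref{lm:type-count} then bounds the number of full-dimensional $A$ arising from any one such lattice by $2^{O(nr^3)}$. Multiplying these together, the total number of full-dimensional $A\subseteq \bits^n$ with $\sep(A)\leq r$ is at most
\[r^2\cdot 2^{r^3(1+o(1))}\cdot 2^{O(nr^3)} \;=\; 2^{O(nr^3)}.\]

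Combining with Lemma~\ref{lm:full-dim}, which guarantees at least $2^{2^n(1-o(1))}$ full-dimensional subsets of $\bits^n$, the inequality $n r^3\geq 2^n(1-o(1))$ must hold, yielding $r\geq 2^{n/3}/n^{1/3}\cdot(1-o(1))^{1/3}=2^{\frac{n}{3}(1-o(1))}$. This contradicts the assumption once $r$ is below that threshold, so some full-dimensional $A$ has $\sep(A)\geq 2^{\frac{n}{3}(1-o(1))}$. There is no single hard step, since all the technical work is packaged in the preceding lemmas; the main care is in the bookkeeping: verifying that one may restrict to full-dimensional extensions $Q$ in dimension $d\geq n$ so that Lemma~\ref{lm:type-count} applies verbatim, and checking that the $2^{r^3(1+o(1))}$ factor from the face-lattice count is absorbed by the $2^{O(nr^3)}$ factor from the per-lattice bound, so that the clean exponent $n/3$ is preserved.
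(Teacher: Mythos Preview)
Your proposal is correct and follows essentially the same counting argument as the paper: restrict to full-dimensional $A$, bound the number of face lattices via Corollary~\ref{cor:type}, bound the number of $A$ per lattice via Lemma~\ref{lm:type-count}, multiply, and compare with Lemma~\ref{lm:full-dim}. The only point to tidy is that Lemma~\ref{lm:type-count} is stated for the coordinate projection onto the first $n$ coordinates, whereas you allow a general affine $\pi$; since $P$ is full-dimensional, $\pi$ is surjective, so an affine change of coordinates on $\R^d$ turns $\pi$ into the coordinate projection without altering the face lattice of $Q$, after which the lemma indeed applies verbatim.
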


\begin{proof} Let $\cal A$ be the set of full-dimensional subsets $A\subseteq \bits^n$.  Let $r \ge n$ be such that every $A\in {\cal A}$ has separation complexity at most $r$. Without loss of generality,  assume  that this is exhibited by a full-dimensional polytope $Q \subseteq \R ^{d}$ with $r$ facets such that the projection of $Q$ on the first $n$ coordinates is a separating polytope for $A$, and $n\leq d\leq r$.
We then have  \[|{\cal A}|\leq |{\cal L}|\cdot \max_{L\in {\cal L}} |S_L|\,,\] where $\cal L$ is the set of combinatorial types of polytopes with $r$ facets.  
By Lemma \ref{lm:type-count},  $|S_L|\leq 2^{O(nr^3)}$. By Corollary \ref{cor:type}, we have  $|{\cal L}|\leq 2^{r^3(1 + o(1))}$. Therefore $|{\cal A}|\leq 2^{cnr^3}$ (for some constant $c$ and $n$ sufficiently large). By Lemma \ref{lm:full-dim}, we must have $2^{cnr^3} \ge 2^{2^n(1 - o(1))}$ and thus $r \ge 2^{\frac{n}{3}(1 - o(1))}$.
\end{proof}

\section{Acknowledgment}

We are grateful to Fedor Part for discussions. This work was supported by  GA{\v C}R grant 19-27871X and was done while Talebanfard was participating in the program {\it Satisfiability: Theory, Practice, and Beyond} at the Simons Institute for the Theory of Computing.

\bibliographystyle{plain}
\bibliography{myrefs,MYref,ref}

\end{document}